

\documentclass[preprint]{elsarticle}


\usepackage{amssymb}
\usepackage{amsthm}
\usepackage{amsmath}
\usepackage{bm}
\usepackage{soul}
\usepackage{multirow}
\usepackage{subcaption}
\usepackage{natbib}
\usepackage[usenames, dvipsnames]{color}

\usepackage{tikz,xcolor}




\begin{document}

\begin{frontmatter}


\title{Ruin probability for the quota share model with~phase-type distributed claims
}

\author[label]{Krzysztof Burnecki}
\address[label]{Faculty of Pure and Applied Mathematics \\ Hugo Steinhaus Center, Wroc{\l}aw University of Science and Technology\\ Wybrze{\.z}e Wyspia{\'n}skiego 27, 50-370 Wroc{\l}aw, Poland}

\author[label]{Zbigniew Palmowski}

\author[label]{Marek Teuerle\corref{cor}}
\ead{marek.teuerle@pwr.edu.pl}

\author[label]{Aleksandra Wilkowska}

\begin{abstract}
In this paper, we generalise the results presented in the literature 
for the ruin probability for the insurer--reinsurer model under a pro-rata reinsurance contract. We consider claim amounts that are described by a phase-type distribution that includes exponential, mixture of exponential, Erlang, and mixture of Erlang distributions.  We derive the ruin probability formulas with the use of change-of-measure technique and present important special cases. We illustrate the usefulness of the introduced model by fitting it to the real-world loss data. With the use of statistical tests and graphical tools, we show that the mixture of Erlangs is well-fitted to the data and is superior to other considered distributions. This justifies the fact that the presented results can be useful in the context of risk assessment of co-operating insurance companies.
\end{abstract}

\begin{keyword}
multidimensional risk process \sep ruin probability \sep change of~measure \sep phase-type distribution \sep mixture of Erlang distributions
\end{keyword}

\end{frontmatter}

\newtheorem{theorem}{\bf Theorem}[section]
\newtheorem{condition}{\bf Condition}[section]
\newtheorem{corollary}{\bf Corollary}[section]
\newtheorem{definition}{\bf Definition}[section]
\newtheorem{example}{\bf Example}[section]
\newtheorem{lemma}{\bf Lemma}[section]
\newcommand{\sgn}{\mathrm{{sgn}}}
\newcommand{\Prob}{\mathrm{{P}}}
\newcommand{\E}{\mathrm{E}}


\section{Introduction}

Risk theory in general and ruin probabilities in particular 
have been an active area of
research since the classical Cramér-Lundberg model, introduced in 1903 by the Swedish actuary Filip Lundberg \citep{lund} and then generalised in the 1930's  by Harald Cram\'er \citep{grenander95}.  

The Cramér-Lundberg model describes the surplus of an insurance company that experiences two opposing cash flows: incoming premiums and outgoing claims. The traditional approach in risk theory is to study the
probability of ruin, that is, the probability that the risk process 
will ever go below zero \citep{asmalb10}.
Ruin is considered a technical term. It does not mean that
the company becomes bankrupt. If ruin occurs, this is interpreted to mean that the company has to take action to make the business profitable. For solvency purposes, the probability of ruin can be used as a rough approximation of the insolvency.
Moreover, setting it to an acceptably low level, the needed initial capital and the rate of premiums can be estimated. It can also serve as a useful tool in long-range planning for the use of insurer’s funds. In addition, ruin theory has deep methodological links and applications to other fields of applied probability, such as queueing theory and mathematical finance \citep{asmalb10}.

The ruin probabilities in infinite and finite time, even for the classical risk process, can only be calculated for a few special cases of the claim amount distribution. For the infinite horizon case, there are well-known elementary results for zero initial capital, and the exponential and mixture of two exponential claim amount distributions,
see \cite{panjer1992,gra91}. For the results for general phase-type distributions, in particular for mixture of $n$ exponential distributions, see \cite{,asmalb10}. For the finite-horizon case, the only convenient "semi-elementary" formula (involving only a simple integral) exists for the exponential distribution \citep{rolski,gra91}. However, this case can always be approximated by the Monte Carlo method.

Recently, multidimensional risk processes have been introduced in the literature to account for multiple lines of business of an insurance company and collaborating insurance companies. The ruin probability can be now defined in several ways, e.g.
when all lines or all companies are ruined or at least one.  The multidimensional ruin problem for light-tailed
claims and general ruin sets was studied for the first time in \cite{collamore1996} and multidimensional heavy-tailed processes in \cite{hultetal2005}. They mainly concentrated on multivariate regularly varying random walks and calculated sharp boundaries for the asymptotic ruin probability.

Since different risks usually have an effect on a few lines of business at the same time, the statistical dependence among claims in these lines should be taken into account. The multidimensional risk process was specialised to the two-dimensional case with claims shared with a predetermined proportion in \cite{app08,app08b}. This case is usually referred to as the insurer--reinsurer model, as it well describes the quota share proportional treaty. It can also be used to model two branches of the same insurance
company. The ruin occurs here if one or both companies go bankrupt. The former case, which is more interesting from a practical point of view, is usually analysed, and the latter can be obtained from the former in a straightforward way. The only simple ruin probability
formulas for the insurer--reinsurer model were provided for exponentially distributed claims in \cite{app08b} (by explicitly inverting the Laplace transform) and later in \cite{burteuwilrisks21} (by means of a change-of-measure technique).  
Another type of dependence was studied in \cite{Behme2020}, where the link was established by a random bipartite network. An extension to a system of two insurers, where the first insurer is experiences claims arising from two independent compound Poisson processes and the second insurer covers a
proportion of the claims was introduced in \cite{Badescu2011}. In \cite{Michna2020}, a model driven
by a general spectrally positive or negative Lévy process was investigated, see also \cite{app08}.

In this paper, we derive the results for the infinite-time ruin probability for the general phase-type distributions. The article is organised as follows. In Section \ref{sec:model}, the model is presented and ruin probabilities are defined. In Section \ref{sec:phase} the results for phase-type claims for the classical Cramér-Lundberg model
 are recalled. The main results are presented in Section \ref{sec:two}. For the insurer--reinsurer model driven by the renewal process, we derive a ruin probability formula for the infinite-time horizon.
The special cases of mixture of exponential and Erlang distributions are presented.
In order to illustrate the usefulness of phase-type distributions in the context of ruin probability, in Section \ref{sec:sim} we analyze loss data from a Polish insurance company. We identify and validate the aggregate non-homogeneous Poisson by means of rigorous statistical tests and visual techniques. We show that the mixture of two Erlang distributions outperforms other considered distributions. This justifies the usefulness of the obtained results and importance of the mixture of two Erlang distributions in modelling the loss data. Section \ref{sec:con} summarises our results.


\section{Insurer--reinsurer model}
\label{sec:model}

We consider here an insurance network that describes capitals of insurer and reinsurer companies that share a quota-share reinsurance contract. We assume that both the insurer and reinsurer participate in settling claims that have common origin. Formally, we can define the network  on the usual probability space $(\Omega,\mathcal{F},\Prob)$ as a system $\left(R_1(t),R_2(t)\right)_{t\geq 0}$ of two Cram\'er-Lundberg models in the following form:
\begin{align}
\left( \!\!\!
\begin{array}{c}
R_1(t) \\
R_2(t)
\end{array}\!\!\!
\right)=
\left( \!\!\!
\begin{array}{c}
x_1 \\
x_2
\end{array}\!\!\!
\right)
+
\left( \!\!\!
\begin{array}{c}
p_1\\
p_2
\end{array}\!\!\!
\right) t
-\left( \!\!\!
\begin{array}{c}
\delta  \\
1-\delta
\end{array}\!\!\!
\right)\sum_{i=1}^{N(t)} X_{i}.
\label{2risk}
\end{align}
Here, $x_1$ and $x_2$ denote the initial capitals of the first and second reinsurer, $p_1$ and $p_2$ are their premium income rates, $\left(N(t)\right)_{t\geq 0}$ 
 is a claim counting Poisson processes with 
intensity $\lambda>0$ that is independent of claim amount sequence $\{X_{i}\}_{i\geq 1}$. Parameter $\delta\in[0;1]$ defines the split proportions $(\delta, 1-\delta)$  for the insurer and reinsurer, respectively.

Usually, we assume that
 \begin{align}
 \label{premiums}
 p_1&=(1+\theta_1)\lambda\delta\, \mathrm{E}(X_i), \nonumber\\
 p_2&=(1+\theta_2)\lambda (1-\delta)\,\mathrm{E}(X_i),
 \end{align}
 where $\theta_1,\theta_2 >0$ are the relative safety loadings. Due to higher acquisition and administration costs of the insurer, it is natural to assume that the premium rate for the insurer is higher than for the reinsurer and therefore the following relation holds:  $\theta_1>\theta_2$.

Our main goal is to obtain an analytical expression for the ruin probability for at least one of considered insurance companies in the infinite time horizon, which is formally defined as follows
\begin{equation}
	\psi_{OR}(u_1,u_2)=\mathbb{P}(\tau(u_1,u_2)< \infty ),
	\label{ruinOr}
\end{equation}
where $\tau_{OR}(u_1,u_2)$ is the ruin time:
\begin{equation}
\tau_{OR}(u_1,u_2)=\inf\{t\geq 0 : R_1(t) < 0 \vee R_2(t) < 0\}.
\label{stoppingOr}
\end{equation}
One can also be interested in the ruin probability for both companies at the same time in the infinite time horizon:
\begin{equation}
	\psi_{SIM}(u_1,u_2)=\mathbb{P}(\tau_{SIM}(u_1,u_2)< \infty ),
	\label{ruinSim}
\end{equation}
Here, the ruin time $\tau_{SIM}(u_1,u_2)$ is defined as follows:
\begin{equation}
\tau_{SIM}(u_1,u_2)=\inf\{t\geq 0 : R_1(t) < 0 \wedge R_2(t) < 0\}.
\label{stoppingSim}
\end{equation}

Let us observe that in fact the ruin probabilities in infinite time (\ref{ruinOr}), (\ref{ruinSim}) of the risk process (\ref{2risk}) is the same as for the re-scaled process $\left(U_1(t),U_2(t)\right)_{t\geq 0}:=\left(R_1(t)/\delta,R_2(t)/(1-\delta)\right)_{t\geq 0}$, that is
\begin{align}
\left( \!\!\!
\begin{array}{c}
U_1(t) \\
U_2(t)
\end{array}\!\!\!
\right)=
\left( \!\!\!
\begin{array}{c}
u_1 \\
u_2
\end{array}\!\!\!
\right)
+
\left( \!\!\!
\begin{array}{c}
c_1\\
c_2
\end{array}\!\!\!
\right) t -
\sum_{i=1}^{N(t)} X_{i},
\label{2riskNormalized}
\end{align}
where $u_1$ and $u_2$ are equal to $x_1/\delta$ and $x_2/(1-\delta)$, and $c_1$ and $c_2$ are equal to $p_1/\delta$ and $p_2/(1-\delta)$, respectively.


\section{Ruin probability for phase-type claims}
\label{sec:phase}

We recall that
in \cite{app08} we find that for (ii) the following holds:
\begin{align}\label{masterformula}
\psi(u_1,u_2)=1-\int_{0}^{\infty}(1-{\psi}_2(z)){\mathbb{P}_{(u_1,T)}}(\mathrm{d}z),
\end{align}
where ${\psi}_2(z)$ is the ruin probability in infinite time for $U_2(t)$ with initial capital $z$ and
\begin{align}
\label{wzorP}
{\mathrm{P}_{(u_1,T)}}(\mathrm{d}z)=\!\mathrm{P}\left(\inf_{ s\leq T}U_1(s)>0, U_1(T) \in \mathrm{d}z | {U_1(0)=u_1} \right),
\end{align}
with the specific time point T such that
\begin{align}\label{T}
T=\frac{u_1-u_2}{(\theta_1-\theta_2)\lambda\, \E X_1}.
\end{align}
In this section we assume that the distribution $F$ of the generic claim size $X$
appearing in \eqref{2risk} is given by
\[F(x)=1-{\bm{\alpha}}e^{\mathbf{Q}x}\mathbf{1},\]
where $\mathbf{1}$ is a column vector with all its entries equal to $1$, $\bm{\alpha}$ is an initial distribution
of a continuous-time Markov chain on $m<\infty$ states
with a transition sub-rate matrix $\mathbf{Q}$ of dimension $m$.
We assume that this Markov chain is transient,
that is, $\mathbf{t}=-\mathbf{Q}\mathbf{1}\geq 0$ has a positive entry.
Then $X$ describes the lifetime of our Markov chain
and its density equals
\begin{equation}\label{densityP}
f(x)={\bm{\alpha}}e^{\mathbf{Q}x}\mathbf{t},\end{equation}
where $e^{\mathbf{A}}=\sum_{k=0}^\infty\frac{\mathbf{A}^k}{k!}$ for any
matrix $\mathbf{A}$.
If for example $X$ has exponential distribution with parameter $\beta$ then $m=1$ and
$\mathbf{Q}=-\beta$, $\mathbf{t}=\beta$ and $\bm{\alpha}=1$.

From Cor. 3.1 on p. 264 of \cite{asmussen} we have the following lemma that defines the ruin probability for one-dimensional risk process with claims being a phase-type distributed $(\bm{\alpha},\bm{Q})$.
\begin{lemma}
\begin{equation}\label{ruonphasetype}
\psi_2(z)=\bm{\alpha}_+e^{\mathbf{Q}_+ z}\mathbf{1},
\end{equation}
where
\begin{equation}\label{qplus}
\mathbf{Q}_+:=\mathbf{Q}+\mathbf{t}\bm{\alpha}_+\quad\text{and}\quad
\bm{\alpha}_+:=-\frac{\lambda}{c_2}\bm{\alpha}\mathbf{Q}^{-1}
\end{equation}
for Poisson intensity $\lambda$ of the claims arrival process\footnote{In the case when we have perturbed by the Brownian motion risk process, that is, $R(t)=R_1(t)+\sigma B(t)$, then
by \cite[Eq. (19)]{AAP}
\[\psi_2(z)=\sum_{j\in S}e^{\rho_j z}A_j,\]
where $\rho_j$ are distinct roots with strictly negative real part of the Cram\'er-Lundberg equation
\[\varphi(\rho)=0\]
for a Laplace exponent $\varphi(\theta)=\log E e^{\theta R(1)}=\frac{\sigma^2\theta^2}{2}+ c_1\theta -\lambda +\lambda (\theta \mathbf{I} -\mathbf{Q})^{-1}\mathbf{t}$ of $R$ and
\[A_j=\lim_{\theta \to \rho_j}\varphi(\theta)(\theta -\rho_j).\]
}.
\end{lemma}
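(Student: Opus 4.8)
The plan is to reduce the statement to the classical Pollaczek--Khinchine (ladder-height) representation of the ruin probability and then to use the closure of the phase-type class under two operations: forming the stationary-excess distribution, and forming a geometric compound. Under the net-profit condition $c_2>\lambda\,\E X$, which is guaranteed by the loading in \eqref{premiums}, the infinite-horizon ruin probability of $U_2(t)$ equals $\psi_2(z)=\mathbb{P}(M>z)$, where $M=\sup_{t\ge 0}\bigl(\sum_{i=1}^{N(t)}X_i-c_2t\bigr)$ is the all-time supremum of the associated claim-surplus process. The classical decomposition represents $M$ as a geometric compound, $M\stackrel{d}{=}\sum_{i=1}^{K}Y_i$, with $\mathbb{P}(K=n)=(1-\rho)\rho^{\,n}$ for $n\ge 0$, $\rho=\lambda\,\E X/c_2<1$, and $Y_1,Y_2,\dots$ i.i.d., independent of $K$, with the stationary-excess density $f_0(x)=\bar F(x)/\E X$.

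The first step is to show that $f_0$ is itself a phase-type density on the same $m$ states. Writing $\bar F(x)=\bm{\alpha}e^{\mathbf{Q}x}\mathbf{1}$, using $\mathbf{1}=-\mathbf{Q}^{-1}\mathbf{t}$ and the fact that $e^{\mathbf{Q}x}$ commutes with $\mathbf{Q}^{-1}$, one obtains $\bar F(x)=\bigl(-\bm{\alpha}\mathbf{Q}^{-1}\bigr)e^{\mathbf{Q}x}\mathbf{t}$, hence $f_0(x)=\bm{\alpha}_0e^{\mathbf{Q}x}\mathbf{t}$ with $\bm{\alpha}_0=-\bm{\alpha}\mathbf{Q}^{-1}/\E X$. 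Since $\E X=-\bm{\alpha}\mathbf{Q}^{-1}\mathbf{1}$, the vector $\bm{\alpha}_0$ has nonnegative entries summing to one, so $(\bm{\alpha}_0,\mathbf{Q})$ is a genuine phase-type representation of $f_0$.

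The second step is to assemble the geometric compound as a single terminating Markov jump process obtained by concatenating the "ladder" blocks: the process is launched, with probability $\rho$, according to $\bm{\alpha}_0$ (with probability $1-\rho$ it is absorbed immediately, so that $\mathbb{P}(M=0)=1-\rho$); inside a block it evolves with generator $\mathbf{Q}$; and every time a block terminates, which occurs at the exit rate $\mathbf{t}$, a fresh block is started with probability $\rho$ and initial law $\bm{\alpha}_0$, or the process is absorbed with probability $1-\rho$. This produces a (defective) phase-type law with initial sub-probability vector $\bm{\alpha}_+=\rho\,\bm{\alpha}_0$ and sub-generator $\mathbf{Q}_+=\mathbf{Q}+\mathbf{t}\,(\rho\bm{\alpha}_0)=\mathbf{Q}+\mathbf{t}\bm{\alpha}_+$, exactly as in \eqref{qplus}; moreover $\mathbf{Q}_+\mathbf{1}=-(1-\rho)\mathbf{t}\le 0$, confirming that $\mathbf{Q}_+$ is a legitimate sub-generator. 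Consequently $\psi_2(z)=\mathbb{P}(M>z)=\bm{\alpha}_+e^{\mathbf{Q}_+z}\mathbf{1}$, and substituting $\rho=\lambda\,\E X/c_2$ gives $\bm{\alpha}_+=\rho\bm{\alpha}_0=-\tfrac{\lambda}{c_2}\bm{\alpha}\mathbf{Q}^{-1}$, which is the stated formula.

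The step I expect to be the main obstacle is making the concatenation argument of the second step fully rigorous, in particular handling the defective initial mass correctly and checking that no spurious ladder phases are created. A clean alternative is to verify the corresponding Laplace-transform identity $\bm{\alpha}_+(s\mathbf{I}-\mathbf{Q}_+)^{-1}\mathbf{1}=\frac{1}{s}\Bigl(1-\frac{1-\rho}{1-\rho\,\bm{\alpha}_0(s\mathbf{I}-\mathbf{Q})^{-1}\mathbf{t}}\Bigr)$ directly, e.g.\ by a Sherman--Morrison expansion of $(s\mathbf{I}-\mathbf{Q}-\mathbf{t}\bm{\alpha}_+)^{-1}$ and then inverting. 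Everything else is routine; and since the conclusion is precisely Cor.~3.1 on p.~264 of \cite{asmussen}, one may alternatively just cite that reference.
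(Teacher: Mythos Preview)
Your argument is correct. The paper itself does not give a proof of this lemma at all: it simply states the result as a direct quotation of Cor.~3.1 on p.~264 of \cite{asmussen} and moves on. What you have written is in fact the standard proof behind that corollary---Pollaczek--Khinchine reduces $\psi_2$ to the tail of a geometric compound of ladder heights, the ladder-height (stationary-excess) law is again phase-type on the same state space with initial vector $\bm{\alpha}_0=-\bm{\alpha}\mathbf{Q}^{-1}/\E X$, and the geometric compound is realised by feeding back the exit intensities $\mathbf{t}$ into a restart with sub-probability $\rho\bm{\alpha}_0=\bm{\alpha}_+$, producing the sub-generator $\mathbf{Q}_+=\mathbf{Q}+\mathbf{t}\bm{\alpha}_+$. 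Your caveat about the concatenation step is prudent but not a real obstacle: the Sherman--Morrison/Laplace-transform verification you outline, or the direct probabilistic construction in \cite{asmussen}, closes it routinely. In short, you have supplied more than the paper does; your final remark that one may alternatively just cite \cite{asmussen} is exactly what the paper does.
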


Note that for the exponential distribution with parameter $\beta$, we have\footnote{See also Thm. 8.3.1, p. 340 of \cite{rolski}, see also Cor. 6.5.3, p. 252 of \cite{rolski}}
\begin{equation}\label{gamma}
\mathbf{Q}_+=-\gamma=-(\beta-\frac{\lambda}{c_2})=-\frac{\beta\theta_2}{1+\theta_2}
\end{equation}
and 
\begin{equation}\label{alphaplusexp}
\bm{\alpha}_+=-\frac{\lambda}{c_2\beta}.
\end{equation}

Hence, the main identity \eqref{masterformula} will give the expression for the two-dimension
ruin probability $\psi(u_1, u_2)$ as long as we identify
${\mathrm{P}_{(u_1,T)}}(\mathrm{d}z)$.

\section{Two-dimensional ruin for phase-type claims}
\label{sec:two}

Now, the numerical analysis of finding two-dimensional ruin probability $\psi(u_1, u_2)$ can be
done for general phase-type distributions. The matrix exponent appearing in \eqref{ruonphasetype} can be found by classical Jordan-type decomposition methods.

For some particular sub-families of phase-type distributions
the whole analysis can be further simplified.
We will now consider two such families of distributions based on \cite{BoxmaMandjes}.

For both families we assume the key condition that all solutions of a
Lundberg equation $\E e^{-sX} \; \frac{E(X)^{-1}}{E(X)^{-1}-(1+\theta)s} =1$ are real, hence the equation
\begin{equation}\label{Lundberg}
\bm{\alpha}(s\mathbf{I}-\mathbf{Q})^{-1}\mathbf{t}\;\frac{(\bm{\alpha}\mathbf{Q}^{-1}\mathbf{1})^{-1}}{(\bm{\alpha}\mathbf{Q}^{-1}\mathbf{1})^{-1}+(1+\theta)s} =1\quad \text{has all real roots $\kappa_i$ for $i=1, \ldots, m$},
\end{equation}
(with possible multiplicity, that is some of $\kappa_i$ might be
equal). We denote by $n_i$ the multiplicity of $\kappa_i$.
By Theorem 4.5 on p. 264 of \cite{asmussen} we know that this assumption is equivalent to
requirement that all eigenvalues of the matrix $\mathbf{Q}_+$ defined in \eqref{qplus} are real.
In other words this means that in the Jordan decomposition of this matrix
given by
\begin{equation}\label{Jordan}
\mathbf{Q}_+=\Delta{\rm diag}(K_i)\Delta^{-1}
\end{equation}
for matrix $\Delta$ with columns being right eigenvectors corresponding to $\kappa_i$,
there are no complex conjugate pairs in the set of solution $\kappa_i$ of \eqref{Lundberg}.
In \eqref{Jordan} $K_i$ is a Jordan block of size $n_i$ equal to
\begin{equation}\label{Jordnablock}
K_i:=
\left(\begin{array}{lllll}
\kappa_i&1&0&\dots&0\\
0&\kappa_i&1&\ldots&0\\
\ldots&\ldots&\ldots&\ldots&\ldots\\
0&0&0&\kappa_i&1\\
0&0&0&\ldots&\kappa_i
\end{array}
\right).
\end{equation}
Note that if all eigenvalues $\kappa_i$ are different, then $K_i=\kappa_i$ and
${\rm diag}(K_i)={\rm diag}(\kappa_i)$.
In particular, if $X$ has the exponential distribution with parameter $\beta$
then $m=1$ and $\kappa_1=-\gamma$.

The first class $\mathcal{M}$  corresponds to mixtures of independent exponentially distributed random variables
satisfying above condition \eqref{Lundberg}.
More precisely, for a given $k\in \mathbb{N}$,
\[f(x)=\sum_{i=1}^k \omega_i \beta_i e^{-\beta_ix}\]
where $\omega_i\geq 0$ with $\sum_{i=1}^k \omega_i=1$. 
The class $\mathcal{M}$ is suitable for representing random variables
with a squared coefficient of variation (scov) strictly larger than one as
one can find a distribution in $\mathcal{M}$ with the same moments; see
\cite[p. 359]{34} when $k=2$ for details.

Second class $\mathcal{S}$ corresponds to sums of independent exponentially distributed random variables
with parameters $\beta_i$ for $i=1,\ldots,k$ satisfying the condition \eqref{Lundberg}.
For this class one can match all finite moments for any distribution with scov strictly less than one.
Note that when all intensities of exponential distributions are equal than
resulting distribution has Erlang distribution with $k$ phases.
The estimation of all the parameters of the distributions from class $\mathcal{M}\cup\mathcal{S}$
can be done via EM algorithm. 

From \eqref{ruonphasetype} and \eqref{Jordan}
we can conclude that
\begin{equation}\label{psi2onceagain}
\psi_2(z)=\bm{\alpha}_+\Delta{\rm diag}(e^{K_i z})\Delta^{-1}\mathbf{1}
\end{equation}
for
\begin{equation*}
e^{K_i z}=
\left(\begin{array}{lllll}
e^{\kappa_i z}&ze^{\kappa_iu}&\frac{1}{2!}z^2e^{\kappa_iz}&\ldots&\frac{1}{(n_i-1)!}z^{n_i-1}e^{\kappa_iz}\\
0&e^{\kappa_iz}&z e^{\kappa_iz}&\ldots&\frac{1}{(n_i-2)!}z^{n_i-2}e^{\kappa_iz}\\
\ldots&\ldots&\ldots&\ldots&\ldots\\
0&0&0&e^{\kappa_iz}&z e^{\kappa_iz}\\
0&0&0&\ldots&e^{\kappa_iz}\end{array}
\right).
\end{equation*}
Moreover, by safety loading condition $\theta_2>0$ and by considering large initial
reserves $z$ it follows that all $\kappa_i<0$.
Let $M\leq m$ be number of different solution of Lundberg equation \eqref{Lundberg}.
Then from \eqref{psi2onceagain}
it follows that
\begin{equation}\label{psi2final}
\psi_2(z)=\sum_{i=1}^M\sum_{j=1}^{n_i} \vartheta_{ij} z^{j-1} e^{\kappa_i z}\end{equation}
for some $\vartheta_{ij}$ ($i=1,\ldots,M$ and $j=1,\ldots, n_i$).
We recall that assumption that there are not
conjugate solutions of Lundberg equation \eqref{Lundberg}
is not always satisfied. As Dickson and Hipp \cite{DicksonHipp} show
if one takes symmetric mixture od Erlang$(2,1)$ and Erlang$(2,2)$, $\lambda=1$
and $c_2=4$ then $m=4$ and then $\bm{\alpha}=(1/2,0, 1/2, 0)$ and
\[\mathbf{Q} =
\left(\begin{array}{llll}
-1& 1& 0& 0\\
0 &-1& 0& 0\\
0 &0 &-2& 2 \\
0 &0 &0 &-2
\end{array}
\right).
\]
Moreover, we have then
\[\mathbf{Q}_+ =
\left(\begin{array}{llll}
-1& 1& 0& 0\\
1/8& -7/8& 1/16& 1/16\\
0& 0& -2& 2\\
1/4& 1/4& 1/8& -15/8
\end{array}
\right),
\]
and {$\bm{\alpha}_+=(1/8,1/8, 1/16, 1/16)$}.
Thus
\begin{align*}
\psi_2(z) &= 0.40026\exp(-0.51949 z)-0.04764\exp(-2.43637z)\\
&+0.02238\exp(-1.39707z)\cos(0.15311z)
- 0.21635\exp(-1.39707z)\sin(0.15311z)
\end{align*}
and it is not of the form of \eqref{psi2final}. In this case additionally $\cos({\Im} \kappa_i-\varrho_i)$ may appear
for $\varrho$ being a radial part of the constant in front of $e^{\kappa_i z}$.

Still, if one take the Erlang$(2, 1)$ of claim size distribution, then
for $\lambda= 1$ and $c_2 = 4$ we have $m=2$, $\bm{\alpha}=(1, 0)$ and
\begin{equation}\label{matrixQ}
\mathbf{Q} =
\left(\begin{array}{ll}
-1&1\\
0& -1
\end{array}
\right).
\end{equation}
Moreover, then
\begin{equation*}\mathbf{Q}_+ =
\left(\begin{array}{ll}
-1&1\\
1/4& -3/4
\end{array}
\right)
\end{equation*}
and {$\bm{\alpha}_+=(1/4,1/4)$}.
Thus
\begin{align}\label{Erlang}
\psi_2(z) = 0.55317 \exp(-0.35961z) - 0.05317 \exp(- 1.39039z)
\end{align}
and it is the form of \eqref{psi2final}.

Since $U_1(t)$ is a L\'evy process, hence by \cite{MeTomek}
we can introduce now the following exponential change of measure:
\begin{equation}\label{change}
\left.\frac{\mathrm{d} \mathrm{Q}_i}{\mathrm{d} \mathrm{P}}\right|_{\bm{\mathcal{F}_t}}=e^{\kappa_i (U_1(t)-u_1) -\varphi_it}
\end{equation}
for a natural filtration $\mathcal{F}_t$ of the process $\left(U_1(t), U_2(t)\right)_{t \geq 0}$
and
\begin{equation}\label{varphi}
\varphi_i=c_1\kappa_i +\lambda \left(\int_0^\infty e^{\kappa_iz}f(z)\mathrm{d}z-1\right)
= c_1\kappa_i +\lambda \left(\bm{\alpha}(\kappa_i\mathbf{I}-\mathbf{Q})^{-1}\mathbf{t}
-1\right).
\end{equation}
By $\mathrm{E}^{\mathrm{Q}_i}$ we will denote the expectation with respect to $\mathrm{Q}_i$.
Moreover, by Prop. 5.6 of \cite{MeTomek}, under probability measure $\mathrm{Q}_i$, process $U_1(t)$
equals $u_1+c_1t-\sum_{k=1}^{N(t)}X_k$ with $N(t)$ being the Poisson process with intensity
\begin{equation}\label{newlambda}\tilde{\lambda}_i=\lambda \int_0^\infty e^{\kappa_i z}f(z)\mathrm{d}z=
\lambda\bm{\alpha}(\kappa_i\mathbf{I}-\mathbf{Q})^{-1}\mathbf{t} \end{equation}

and generic claim size $X$ has new density function
\begin{equation}\label{densityQ}
\tilde{f}_i(x)=\frac{e^{\kappa_i x}f(x)}{\int_0^\infty e^{\kappa_i z}f(z)\mathrm{d}z}=
e^{\kappa_i z}f(z)\left[\bm{\alpha}(\kappa_i\mathbf{I}-\mathbf{Q})^{-1}\mathbf{t}\right]^{-1}=
{\bm{\alpha}}e^{\mathbf{Q}_ix}\mathbf{t}_i
\end{equation}
for
\[\mathbf{Q}_i:=\mathbf{Q}-\kappa_i\mathbf{I}\quad\text{and}\quad
\mathbf{t}_i:=-\mathbf{Q}_i\mathbf{1}\geq 0.\]

Note that from the representation \eqref{densityP} it follows that
$\tilde{f}_i(x)$ is again phase-type with generators $(\bm{\alpha}, \mathbf{Q}_i)$.
In particular, if $X$ has the exponential distribution with parameter $\beta$ then,
under probability measure $\mathrm{Q}_i$, the generic claim size $X$ has the exponential distribution with parameter
$-\mathbf{Q}_1=\beta+\kappa_1=
\beta-\gamma$ for $\gamma$ defined in \eqref{gamma}. Moreover, in this case $\tilde{\lambda}_1=\frac{\lambda \beta}{\beta-\gamma}$.
Finally, 
\begin{equation}\label{varphiexp}
\varphi_1=-c_1\gamma +\lambda\frac{\gamma}{\beta-\gamma}=-\gamma\left(c_1-\frac{\lambda}{\beta-\gamma}\right).
\end{equation}
Similarly, if $X$ has the Erlang distribution $(n,\beta)$ then,
under probability measure $\mathrm{Q}_i$, the generic claim size $X$ has Erlang distribution
$(n,\beta^Q_i)$ with
$\beta^Q=\beta+\kappa_i$.

The main results of this section is given by the following theorem.
\begin{theorem}
We have
\begin{align}
\psi(u_1,u_2)&=1\!-\!\mathrm{P}( \inf_{s\leq T}U_1(s)>0)\!\nonumber\\
&\qquad+\sum_{i=1}^M\sum_{j=1}^{n_i}\vartheta_{ij}\;
\frac{\partial^{j-1}}{\partial \kappa_i^{j-1}}\left\{
e^{\varphi_i T}e^{\kappa_i u_1}
\mathrm{Q}_i\left(\inf_{s\leq T}U_1(s)>0\right)\right\},\label{masterformula4}
\end{align}
where $\varphi_i$ is defined in \eqref{varphi},
$\vartheta_{ij}$ are defined in \eqref{psi2final} via \eqref{psi2onceagain} and $\kappa_i$ ($i=1,\ldots,M$)
solve Lundberg equation \eqref{Lundberg}.
For $j=1$ the partial derivative $\frac{\partial^{j-1}}{\partial \kappa_i^{j-1}}$ is understood as not taken at all.
\end{theorem}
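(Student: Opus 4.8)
The plan is to feed the explicit representation \eqref{psi2final} of $\psi_2$ into the master identity \eqref{masterformula} and then to recognise each resulting term as a derivative of an exponential moment of $U_1(T)$ on the survival event, which the exponential change of measure \eqref{change} converts into a survival probability under $\mathrm{Q}_i$.

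First I would split $1-\psi_2(z)$ in \eqref{masterformula}. By definition \eqref{wzorP}, $\mathbb{P}_{(u_1,T)}$ is the law of $U_1(T)$ restricted to the event $\{\inf_{s\le T}U_1(s)>0\}$, so $\int_0^\infty\mathbb{P}_{(u_1,T)}(\mathrm{d}z)=\mathrm{P}(\inf_{s\le T}U_1(s)>0)$, which produces the first two terms on the right-hand side of \eqref{masterformula4}. It then remains to show that $\int_0^\infty\psi_2(z)\,\mathbb{P}_{(u_1,T)}(\mathrm{d}z)$ equals the double sum in \eqref{masterformula4}. Inserting \eqref{psi2final} and interchanging the finite double sum with the integral, this reduces to evaluating, for each root $\kappa_i$ and each $1\le j\le n_i$, the moment $\int_0^\infty z^{j-1}e^{\kappa_i z}\,\mathbb{P}_{(u_1,T)}(\mathrm{d}z)$.

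Next I would introduce $h(\kappa):=\int_0^\infty e^{\kappa z}\,\mathbb{P}_{(u_1,T)}(\mathrm{d}z)=\mathrm{E}\!\left[e^{\kappa U_1(T)};\,\inf_{s\le T}U_1(s)>0\right]$. Because $U_1$ has no positive jumps, $U_1(T)\le u_1+c_1 T$ almost surely, so $\mathbb{P}_{(u_1,T)}$ is carried by the bounded interval $(0,u_1+c_1T]$; hence $h$ extends to an entire function, differentiation under the integral sign is legitimate, and $\int_0^\infty z^{j-1}e^{\kappa z}\,\mathbb{P}_{(u_1,T)}(\mathrm{d}z)=h^{(j-1)}(\kappa)$. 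Applying the change of measure \eqref{change}, whose density $e^{\kappa(U_1(t)-u_1)-\varphi_i t}$ is a mean-one $\mathcal{F}_t$-martingale for every exponent in a real neighbourhood of $\kappa_i$ by \cite[Prop.~5.6]{MeTomek}, to the $\mathcal{F}_T$-measurable indicator of $\{\inf_{s\le T}U_1(s)>0\}$ gives $h(\kappa)=e^{\kappa u_1}e^{\varphi_i T}\,\mathrm{Q}_i\!\left(\inf_{s\le T}U_1(s)>0\right)$ with $\varphi_i$ and $\mathrm{Q}_i$ read as functions of that exponent; this is precisely the bracketed expression in \eqref{masterformula4}. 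Differentiating $j-1$ times, setting the exponent equal to $\kappa_i$, multiplying by $\vartheta_{ij}$ and summing over $i$ and $j$ then yields the identity, the case $j=1$ being just $h(\kappa_i)$ in accordance with the convention that no derivative is taken.

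The only step that is more than bookkeeping is the interchange of $\partial^{j-1}/\partial\kappa_i^{j-1}$ with the integral defining $h$, together with the attendant remark that the change-of-measure representation of $h$ holds throughout a neighbourhood of the root $\kappa_i$ and not merely at $\kappa_i$, so that the two sides share their $\kappa_i$-derivatives. Both points rest on the boundedness of the support of $\mathbb{P}_{(u_1,T)}$ coming from the spectral negativity of $U_1$, which makes $h$ entire and keeps the exponential density in \eqref{change} integrable uniformly for exponents near $\kappa_i$; once this is in place, collecting the terms is routine.
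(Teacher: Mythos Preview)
Your argument is correct and follows essentially the same route as the paper: insert \eqref{psi2final} into \eqref{masterformula}, recognise $\int_0^\infty z^{j-1}e^{\kappa_i z}\,\mathbb{P}_{(u_1,T)}(\mathrm{d}z)$ as $\partial^{j-1}_{\kappa_i}\mathrm{E}[e^{\kappa_i U_1(T)};\,\inf_{s\le T}U_1(s)>0]$, and rewrite the latter via the exponential change of measure \eqref{change}. Your additional remark that the spectral negativity of $U_1$ bounds the support of $\mathbb{P}_{(u_1,T)}$, so that $h$ is entire and the differentiation/change-of-measure steps are uniformly valid near each $\kappa_i$, supplies justification that the paper's proof leaves implicit.
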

\begin{proof}
From \eqref{masterformula} and \eqref{psi2final} we have
\begin{align}
\label{masterformula3}
\psi(u_1,u_2)=1\!-\!\mathrm{P}( \inf_{s\leq T}U_1(s)>0)\!+\sum_{i=1}^M\sum_{j=1}^{n_i}\vartheta_{ij}\!\!\!\mathrm{}\int_{0}^{\infty}\!\!\!\!
z^{j-1}e^{\kappa_i z}{\mathrm{P}_{(u_1,T)}}(\mathrm{d}z).
\end{align}
Moreover, note that by the definition of measure $\mathrm{Q}_i$ given in \eqref{change} it follows that
\begin{align*}\int_{0}^{\infty}\!\!\!\!
z^{j-1}e^{\kappa_i z}{\mathrm{P}_{(u_1,T)}}(\mathrm{d}z) &=
\frac{\partial^{j-1}}{\partial \kappa_i^{j-1}}\int_{0}^{\infty}\!\!\!\!
e^{\kappa_i z}{\mathrm{P}_{(u_1,T)}}(\mathrm{d}z)\\& =
\frac{\partial^{j-1}}{\partial \kappa_i^{j-1}}
\mathrm{E}\left[e^{\kappa_i U_1(T)}; \inf_{s\leq T}U_1(s)>0\right]\\&=
\frac{\partial^{j-1}}{\partial \kappa_i^{j-1}}\left\{
e^{\varphi_i T}e^{\kappa_i u_1}
\mathrm{Q}_i\left(\inf_{s\leq T}U_1(s)>0\right)\right\}.
\end{align*}
\end{proof}
We denote by $\mathbf{a}_i$ the $i$th component of a vector $\mathbf{a}$.
\begin{corollary}\label{maincor}
Let us now assume that all $m$ solutions $\kappa_i$ of Lundberg equation \eqref{Lundberg} are different.
Then $M=m$, $n_i=1$ and
\begin{align}
\psi(u_1,u_2)&=1\!-\!\mathrm{P}( \inf_{s\leq T}U_1(s)>0)\!\nonumber\\
&\qquad+\sum_{i=1}^m\vartheta_{i}\;
e^{\varphi_i T}e^{\kappa_i u_1}
\mathrm{Q}_i\left(\inf_{s\leq T}U_1(s)>0\right),\label{finalformula}
\end{align}
where
\begin{equation}\label{vartheta}
\vartheta_{i}:=
\left(\bm{\alpha}_+\Delta\right)_i\left(\Delta^{-1}\mathbf{1}\right)_i
\end{equation}
for $\Delta$ defined via Jordan decomposition $\mathbf{Q}_+=\Delta{\rm diag}(\kappa_i)\Delta^{-1}$ with
$\bm{\alpha}_+$ and $\mathbf{Q}_+$ defined \eqref{qplus}
and  $\varphi_i$ is defined in \eqref{varphi}.
Moreover, under $\mathrm{P}$ and $\mathrm{Q}_i$, the claim size density is given by \eqref{densityP}
and \eqref{densityQ}, respectively.

\end{corollary}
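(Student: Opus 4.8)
The plan is to read Corollary~\ref{maincor} off the preceding theorem by specialising to the non-degenerate case. First I would observe that the hypothesis that all $m$ roots $\kappa_i$ of the Lundberg equation~\eqref{Lundberg} are pairwise distinct is, by the very definitions of $M$ (the number of distinct roots) and of $n_i$ (the multiplicity of $\kappa_i$), the same as saying $M=m$ and $n_i=1$ for every $i$. Substituting $n_i=1$ into the master formula~\eqref{masterformula4} collapses the inner sum over $j$ to its single term $j=1$; by the convention stated in the theorem the operator $\partial^{j-1}/\partial\kappa_i^{j-1}$ is then not applied at all, and writing $\vartheta_i:=\vartheta_{i1}$ one obtains~\eqref{finalformula} at once, provided one identifies these constants explicitly.

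The second and only substantive step is to verify that $\vartheta_i=(\bm{\alpha}_+\Delta)_i(\Delta^{-1}\mathbf{1})_i$. Here I would return to~\eqref{psi2onceagain}: when the $\kappa_i$ are distinct, every Jordan block $K_i$ reduces to the scalar $\kappa_i$, so $e^{K_i z}=e^{\kappa_i z}$ and
\[
\psi_2(z)=\bm{\alpha}_+\Delta\,{\rm diag}(e^{\kappa_i z})\,\Delta^{-1}\mathbf{1}
=\sum_{i=1}^{m}(\bm{\alpha}_+\Delta)_i(\Delta^{-1}\mathbf{1})_i\,e^{\kappa_i z}.
\]
On the other hand, by~\eqref{psi2final} with $n_i=1$ we also have $\psi_2(z)=\sum_{i=1}^{m}\vartheta_i e^{\kappa_i z}$. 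Since the functions $z\mapsto e^{\kappa_i z}$ attached to pairwise distinct exponents are linearly independent, the coefficients in the two expansions must coincide, which is exactly~\eqref{vartheta}. The final assertion of the corollary, about the form of the claim size density under $\mathrm{P}$ and under $\mathrm{Q}_i$, requires nothing new: it merely restates~\eqref{densityP} together with the remark following~\eqref{densityQ} that the exponential change of measure~\eqref{change} keeps the claim law phase-type, now with generators $(\bm{\alpha},\mathbf{Q}_i)$ where $\mathbf{Q}_i=\mathbf{Q}-\kappa_i\mathbf{I}$.

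I do not anticipate a genuine obstacle, as the corollary is essentially a bookkeeping reduction of the theorem. The one point that deserves an explicit sentence is the uniqueness of the coefficients in the representation~\eqref{psi2final}: for the corollary it follows from the linear independence of distinct exponentials, and in the general (multiplicity) situation it would follow from the linear independence of the functions $z^{j-1}e^{\kappa_i z}$. Everything else in the argument is direct substitution into~\eqref{masterformula4}.
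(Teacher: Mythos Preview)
Your proposal is correct and matches the paper's implicit argument: the corollary is stated without proof there, being an immediate specialisation of the theorem, and your reduction (collapsing the $j$-sum via $n_i=1$, then reading off $\vartheta_i$ from~\eqref{psi2onceagain} by linear independence of distinct exponentials) is exactly the intended one. If anything, you have supplied more detail than the paper does, particularly on the coefficient identification step.
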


If the claim size has a exponential distribution with parameter $\beta>0$ then by \eqref{alphaplusexp}
$\vartheta_{1}=\bm{\alpha}_+=-\frac{\lambda}{c_2\beta}$.
Thus by \eqref{varphiexp}
\begin{align}
\psi(u_1,u_2)&=1\!-\!\mathrm{P}( \inf_{s\leq T}U_1(s)>0)\!\nonumber\\
&\qquad-\frac{\lambda}{c_2\beta} 
e^{-\gamma\left(c_1-\frac{\lambda}{\beta-\gamma}\right) T}e^{-\gamma u_1}
\mathrm{Q}_i\left(\inf_{s\leq T}U_1(s)>0\right).
\end{align}

Note that $\mathrm{P}( \inf_{s\leq T} U_1(s)>0)$ and $\mathrm{Q}_i( \inf_{s\leq T}U_1(s)>0)$
can be calculated by using numerical procedures, see e.g. \cite{Stanford}.
Another approach is related with
the power series expansion which is done the claim size distributions of mixed Erlang type in
\cite{389} and \cite{322}; see also \cite{310,893}. An alternative very accurate numerical method is to
randomize the time horizon $T$.
The detailed numerical analysis in some special cases of phase-type distribution of claim sizes
and other comments will be subject of next section.

\subsection{Ruin probability for the mixture of two exponentials}

\begin{example} \rm
In this part we establish a ruin probability for the model (\ref{2risk}) assuming that the claims follows a mixture of exponential distributions. For the simplicity in presentation of results we investigate a mixture of two exponential distributions given by positive weights $\omega_1, \omega_2$ that $\omega_1+\omega_2=1$ and means $\beta^{-1}_1$ and $\beta^{-1}_2$, respectively. Our result can be easily extended to the case where the mixture consists of finite number of exponential distributions.
\end{example}

$$\psi(u_1,u_2)=\Prob(\inf\{t\geq 0 : R_2(t) < 0\}<\infty).$$
Clearly, this observation reduces the two-dimensional case to the purely one-dimensional problem which has been solved analytically for the class of phase-type claims \cite{asmussen, rolski}.

\subsection{Ruin probability for the Erlang}

\begin{example}\rm
If $X$ has Erlang $(2,1)$ law then
from \eqref{Erlang} and \eqref{psi2final} it follows that
$\kappa_1=0.35961$, $\kappa_2=1.39039$. Then in the next step
from \eqref{varphi} we find
$\varphi_i$ ($i=1,2$) where $\bm{\alpha}=(1, 0)$, $\lambda=1$
and matrix $\mathbf{Q}$ is given in \eqref{matrixQ}.
Then from Corollary \ref{maincor} we can conclude that in this case
\begin{align}
\psi(u_1,u_2)&=1\!-\!\mathrm{P}( \inf_{s\leq T}U_1(s)>0)\!\nonumber\\
&\qquad+0.55317\;
e^{\varphi_1 T}
\mathrm{Q}_1\left(\inf_{s\leq T}U_1(s)>0\right)\\
&\qquad- 0.05317\;
e^{\varphi_2 T}
\mathrm{Q}_2\left(\inf_{s\leq T}U_1(s)>0\right).
\end{align}
where under measures $\mathrm{P}$, $\mathrm{Q}_1$, $\mathrm{Q}_2$ the risk process
$U_1$ has premium $c_1$ and claim size Erlang distributed with parameters
$(2,1)$, $(2, 1.35961)$, $(2, 2.39039)$, respectively.
Note that $\mathrm{P}( \inf_{s\leq T}U_1(s)>0)=\mathrm{P}(\tau>T)$ and
 $\mathrm{Q}_i( \inf_{s\leq T}U_1(s)>0)=\mathrm{Q}_i(\tau>T)$ ($i=1,2$)
 for the ruin time $\tau$ for the risk process $U_1(t)$
 and $T$ given in \eqref{T}.
To find this quantity, it is enough to find the density $w(u_1,t)$ of the
ruin time $\tau$. This can be done using \cite[page 58]{322}.
\end{example}


\section{Numerical analysis for phase-type distributions}
\label{sec:sim}

We analyse now real-world loss data describing liability insurance claims obtained from a Polish insurance company in the years 2004-2012. The first step is to prepare the data so that the claim amounts are discounted at the same moment and aggregated on the single claim basis.  Analysis of the empirical claim amount distribution reveals two claims that deviate from the rest of the sample. These two claims constitute 5.31\% of all claims. For the purpose of this study they were excluded as outliers. The final sample consists of 542 payments and is shown in Figure \ref{fig:dane}.

\begin{figure}[!ht]
		\centering	
	\includegraphics[height=6cm]{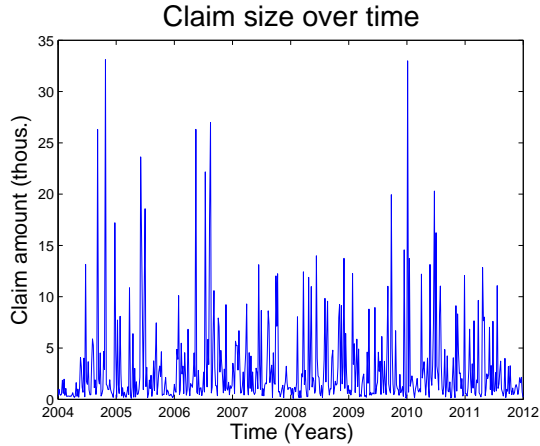}	
	\caption{Third-part liability insurance data from a Polish insurance company in the years 2010-2012.}
	\label{fig:dane}	
\end{figure}

The following distributions were taken into account to describe the claim amounts: exponential, mixture of exponentials, Erlang and mixture of Erlangs.   To check the goodness of fit we consider four test statistics based on the distance between the empirical and fitted distribution function.

The first considered test statistic is the classical Kolmogorov--Smirnov statistic $D$ based on the supremum norm defined as:
$$D= \sup_{x} |F_{n}(x)-F(x)|.$$
A similar statistic is the Kuiper $V$: 
$$	V= D^{+}+D^{-},$$
where $D^{+} = \sup_{x}\{F_{n}(x)-F(x)\}$ and  $D^{-} = \sup_{x}\{F(x)-F_{n}(x)\}$.

We will also use statistics calculated on the basis of quadratic norm, namely Cramer---von Mises $W^2$ and Anderson and Darling $A^2$ statistics:
$$
	W^2= n\int_{-\infty}^{\infty} (F_{n}(x)-F(x))^2 dF(x)$$
and 
	$$A^2= n\int_{-\infty}^{\infty} (F_{n}(x)-F(x))^2[F(x)(1-F(x))]^{-1} dF(x).
    $$

The former statistic puts more weight on observations in the tails of the distribution and is one of the most powerful statistical tests for detecting most departures from normality, cf.  \cite{dagste86}.

In order to estimate the parameters of the distributions we apply the estimation method based on minimising $A^2$ statistics. To calculate $p$-values for the studied tests we follow the Monte Carlo simulation algorithm  described in \cite{buretal11}.
The results of parameter estimation and hypothesis testing for the data are presented in Table~\ref{tab:tests}.

To calculate $p$-values for the studied tests we follow the Monte Carlo simulation algorithm  described in \cite{buretal11}.
The results of the parameter estimation and hypothesis testing for the third-party liability insurance claims amount are presented in Table~\ref{tab:tests}.
We decided not to include exponential distribution in the table since  for the Erlang distribution the coefficient $\alpha = 1$, which means that Erlang distribution simplifies to the exponential.
 
  \begin{center}
\begin{table}[!ht]
 \begin{center}
	\caption{Parameter estimates and test statistics for the Polish third-party liability insurance data. The corresponding $p$-values based on 1000 simulated samples are given in parentheses.}
    \label{tab:tests}
	\begin{tabular}{|lccc|} 		
		\hline
	Distribution & Mixture of exps &  Erlang & Mixture of Erlangs  \\
  \hline\hline
	Parameters  &  &  & \\  
	$\alpha = $ & $(0.1984,0.8016) $ &  $(1) $ &  $(0.8673,0.1327,0)$ \\
	 $Q =$  & $10^{-4} \left[\begin{array}{cc}
  -1.14  &  0    \\
  0 &  -5.90 
 \end{array}\right]$ & $10^{-4} \left[\begin{array}{c}
  4
 \end{array}\right]$ &$10^{-4} \left[\begin{array}{ccc}
  -6  &  0  &  0 \\
  0 &  -2 & 2 \\
  0 &  0 & -2 
 \end{array}\right]$  \\
 \hline
	Test results & $D = 0.0663 $ &  $D = 0.0967 $ & $D =  0.0661 $ \\
	& $(<0.005) $ & $ (<0.005) $ & $(<0.005)$  \\ 
	 & $ V=0.1229  $ & $ V=0.1662   $ & $ V=0.1223  $  \\
	& $ (<0.005)$ & $(<0.005)$ & $(<0.005)$  \\
	& $W^2= 0.4940$  & $W^2= 1.3207 $ & $W^2= 0.4869 $\\
	& $(0.02)$&   $ (<0.005) $ & $(<0.005)$   \\ 
	&  $A^2 = 4.3677  $ &  $A^2 = 12.4655 $ &  $A^2 = 4.3115  $  \\
	& $(<0.005)$ & $(<0.005)$  & $(<0.005)$  \\ 
		\hline		
	\end{tabular}
  \end{center}
\end{table}
 \end{center}


 
 Unfortunately, neither of the proposed distributions passes the tests. However, we can see that the mixture of Erlang distributions with parameters $\alpha = (0.8673,0.1327,0)$ and $10^{-4} \left[\begin{array}{ccc}
  -6  &  0  &  0 \\
  0 &  -2 & 2 \\
  0 &  0 & -2 
 \end{array}\right]$  has the best results in terms of test statistics (the statistic values are the lowest). 
 
 We also check the quality of fit graphically by comparing the cumulative empirical and fitted distribution functions, see Figure \ref{fig:dists}. In addition, a histogram is plotted with theoretical probability functions corresponding to the fitted distributions.
 The illustrations suggest that mixtures of exponential and Erlang distributions are best fitted to the data.

\begin{figure*}[t!]
    \centering
    \begin{subfigure}[t]{0.55\textwidth}
        \centering
        \includegraphics[height=2.1in]{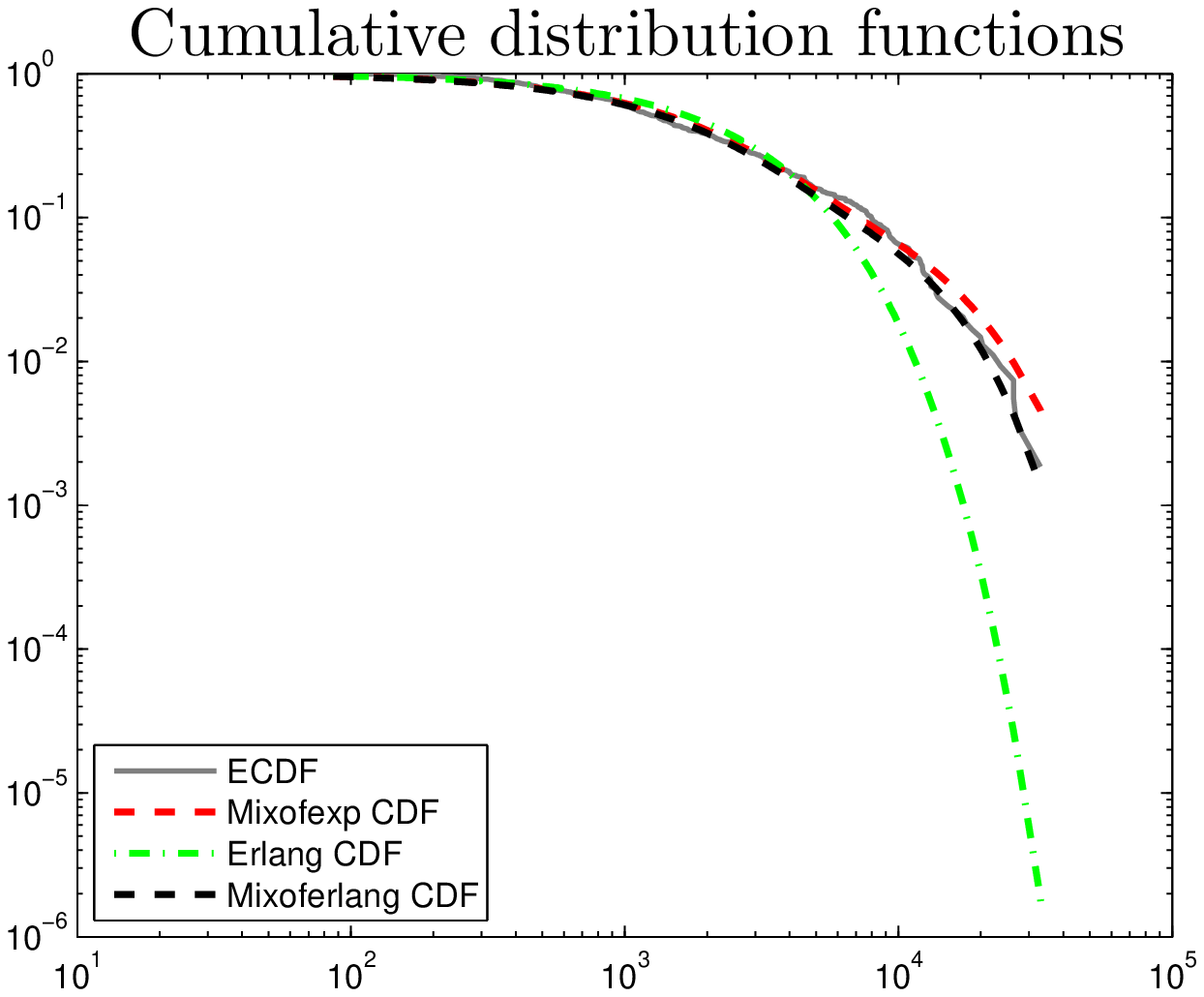}
    \end{subfigure}%
    ~ 
    \begin{subfigure}[t]{0.45\textwidth}
        \centering
        \includegraphics[height=2.1in]{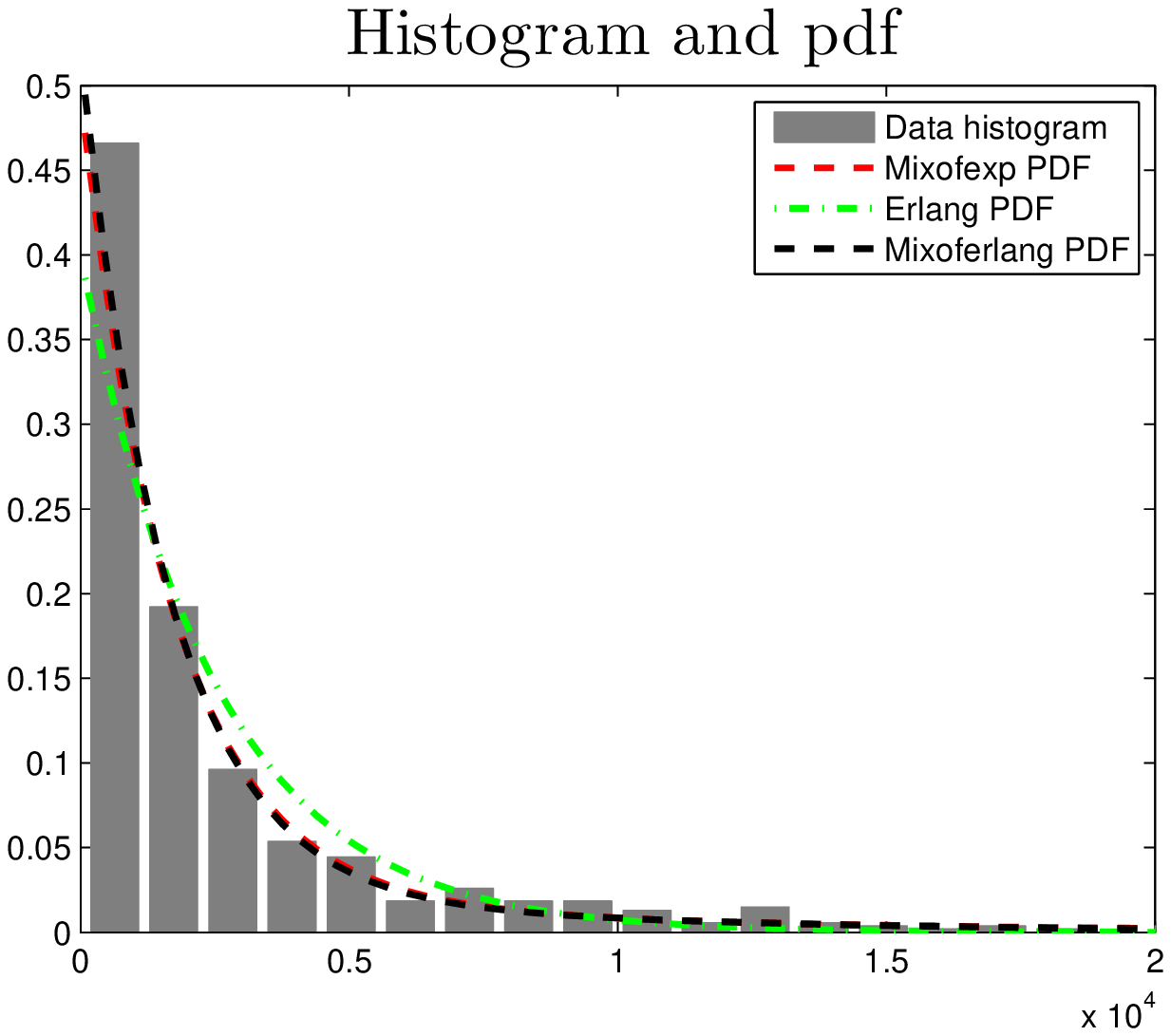}
    \end{subfigure}
    \caption{(Left panel) 
    Empirical cumulative distribution function (ECDF) and fitted cumulative distribution functions for fitted distributions. (Right panel) 
    Histogram and probability density functions (PDFs) for fitted distributions.}
    \label{fig:dists}
\end{figure*}

Now, we identify the claim counting process.
 Firstly, we determine the number of claims in subsequent months. They are depicted in Figure \ref{fig:liczba2}.

\begin{figure}[!ht]
	\centering	
	\includegraphics[height=6cm]{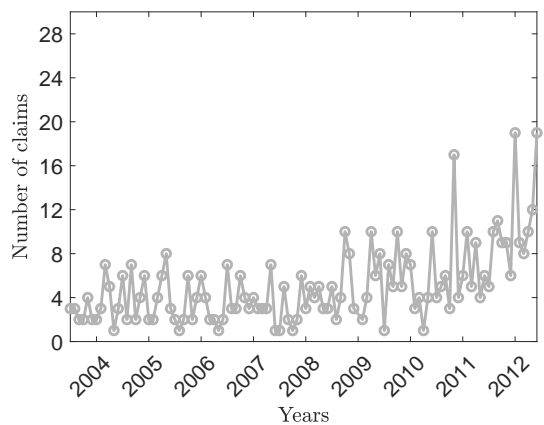}	
	\caption{Number of claims in months for the analysed data.}
	\label{fig:liczba2}	
\end{figure}

We do not observe seasonality, however, one can observe an increase in the number of claims in subsequent years. That is why we decided to apply a non-homogeneous Poisson process with $\lambda(t)$ being polynomial or exponential function. To find a proper intensity function, we fit polynomial or exponential functions to the aggregated number of claims. Parameters of the functions are estimated by minimisation of mean-squared error (the error is calculated with respect to the mean value function of the non-homogeneous Poisson process). In Figure \ref{fig:poissonf} we present the graphical comparison of the aggregate number of claims 
with the mean value function for analysed forms of intensity function.

\begin{figure}[!ht]
	\centering	
	\includegraphics[height=6cm]{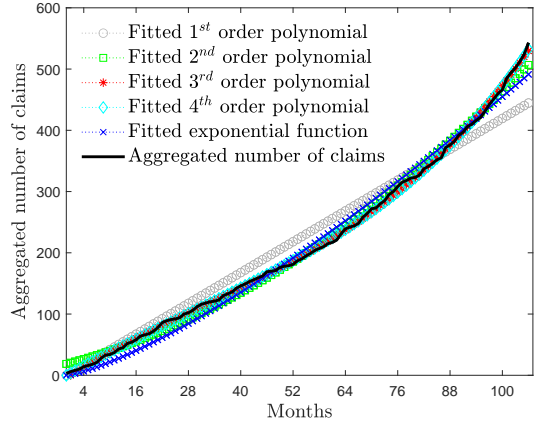}	
	\caption{Aggregate number of claims 
and the mean value functions corresponding to the fitted intensity functions.}
	\label{fig:poissonf}	
\end{figure} 

We notice that a first-degree polynomial is clearly worse suited to the aggregate number of claims. To verify the quality of fit of the intensity functions which are higher order polynomials or exponential functions, we determine the mean-squared error of the considered functions and present the results in Table~\ref{tab:msepoiss}.

\begin{center}
\begin{table}[!ht]
 \begin{center}
	\caption{Mean-squared error for the considered intensity functions
$\lambda(t)$ }
    \label{tab:msepoiss}
	\begin{tabular}{lccccc}
		\hline
	& \multicolumn{4}{c}{Order of the polynomial}&  \\
		\hline
	$\lambda(t)$ & $1^{st}$  & $2^{st}$   & $3^{rd}$  &    $4^{st}$  & Exp. function \\
  \hline
	MSE  & 22.23 &  8.52 &  3.16 & 3.09 &13.20
 \\  
 \hline
	\end{tabular}
  \end{center}
\end{table}
 \end{center} 
 
 Based on the mean-square errors, we choose the $3^{rd}$ order polynomial. For the higher order, the gain is negligible, and for the exponential function it even increases. Therefore, for the considered data, we select the non-homogeneous Poisson process with the intensity function:

\begin{equation*}
    \lambda(t) = 0.04t^3+4.54t^2+0.0004t-4.38.
\end{equation*}

\subsection{Probability of ruin for different scenarios}

We now calculate the ruin probability values using the formulas derived in Section \ref{sec:two}. The obtained results are compared to the probability of ruin calculated with the use of empirical distribution function (non-parametric bootstrap). In Figure \ref{fig:ruin} 
we can see the results for the fitted mixture of exponentials, Erlang and mixture of Erlangs, and the $90\%$ confidence interval obtained by the non-parametric bootstrap.

\begin{figure}[!ht]
		\centering	
	\includegraphics[height=6cm]{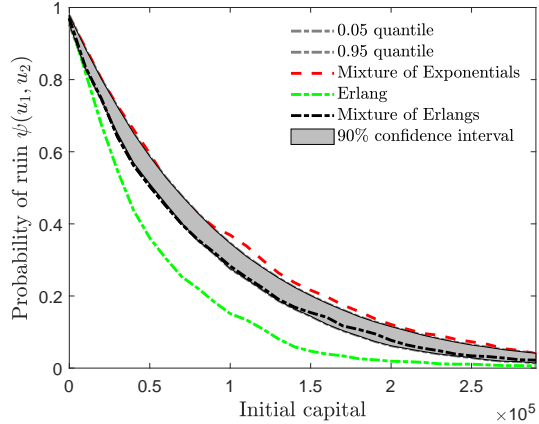}	
	\caption{Ruin probabilities with respect to the initial capital $u$ for the fitted mixture of exponential, Erlang mixture of Erlang distributions along with the 90$\%$ confidence interval created by means of non-parametric bootstrap of the loss data.}
	\label{fig:ruin}	
\end{figure}

We can clearly observe that only for the mixture of Erlang distributions the obtained values lie in the area between the quantiles of order 0.05 and 0.95, which proves the goodness of fit for this distribution. The probability of ruin calculated for the mixture of exponential distributions seems overestimated and for Erlang it is heavily underestimated. The analysis carried out leads to the conclusion that the Erlang mixture is the best suited for the data.

\section{Conclusions}
\label{sec:con}

In this paper, the problem of ruin probability in the case of a two-dimensional risk process for general phase-type claim amounts is investigated. The considered risk process assumes that both premiums and claims are divided between two lines in the same fixed proportions. Such a system can describe the capitals of the insurer and reinsurer under the quota share contract or two lines of business of the insurance company where the claims split on a pro rata basis. 

Our main findings are based on the purely stochastic arguments.
Our main technique is the change of the measure that allows us to express the Laplace transform as the ruin probability of some modified risk process.
We derived infinite-time ruin probability formulas for general phase-type distributions and present specialised results for the mixture of exponential, Erlang and mixture of Erlang distributions.  

In order to illustrate presented results, we considered loss data from a Polish insurance company. The data contained claim amounts resulting from third-party liability insurance between 2004 and 2012.
We fitted a non-homogeneous Poisson process to the claim counting process and considered exponential,  mixture of exponential, Erlang and mixture of Erlang distributions as candidates to describe the claim amount sequence.
We performed statistical tests based on empirical cumulative distribution function and analysed the right tails of the fitted distribution. Finally, we calculate the ruin probability values for the considered distributions and compared them with the ruin probabilities obtained from the empirical distribution function. The analyses show that the model based on the mixture of two Erlang distributions is the best fitted to the data which illustrates the usefulness of phase type distributions in the context of the risk assessment.


\bibliographystyle{elsarticle-harv}
\bibliography{biblio}

@article{app08,
title = "A two-dimensional ruin problem on the positive quadrant",
journal = "Insurance: Mathematics and Economics",
volume = "42",
number = "1",
pages = "227 - 234",
year = "2008",
issn = "0167-6687",
doi = "https://doi.org/10.1016/j.insmatheco.2007.02.004",
url = "http://www.sciencedirect.com/science/article/pii/S0167668707000182",
author = "Florin Avram and Zbigniew Palmowski and Martijn Pistorius"
}

@book{asmussen,  
title={Ruin Probabilities},  
author={Asmussen, S.},  
isbn={9789812779311},  
series={Advanced Series in Dynamical Systems},  
url={https://books.google.pl/books?id=LblaB4XJg9wC},  
year={2000},  
publisher={World Scientific}
}

@Inbook{Burnecki2011,
author="Burnecki, Krzysztof and Teuerle, Marek",
editor="Cizek, Pavel and H{\"a}rdle, Wolfgang Karl and Weron, Rafa{\l}",
title="Ruin probability in finite time",
bookTitle="Statistical Tools for Finance and Insurance",
year="2011",
publisher="Springer Berlin Heidelberg",
address="Berlin, Heidelberg",
pages="329--348",
isbn="978-3-642-18062-0",
doi="10.1007/978-3-642-18062-0_10",
url="https://doi.org/10.1007/978-3-642-18062-0_10"
}

@book{rolski1999stochastic,  
title={Stochastic Processes for Insurance and Finance},  
author={Rolski, T. and Schmidli, H. and Schmidt, V. and Teugels, J.},  
isbn={9780471959250},  
lccn={98044624},  
series={Wiley Series in Probability and Statistics},  
url={https://books.google.pl/books?id=dlIhmQEACAAJ},  
year={1999},  
publisher={Wiley}
}



\begin{thebibliography}{27}
\expandafter\ifx\csname natexlab\endcsname\relax\def\natexlab#1{#1}\fi
\providecommand{\url}[1]{\texttt{#1}}
\providecommand{\href}[2]{#2}
\providecommand{\path}[1]{#1}
\providecommand{\DOIprefix}{doi:}
\providecommand{\ArXivprefix}{arXiv:}
\providecommand{\URLprefix}{URL: }
\providecommand{\Pubmedprefix}{pmid:}
\providecommand{\doi}[1]{\href{http://dx.doi.org/#1}{\path{#1}}}
\providecommand{\Pubmed}[1]{\href{pmid:#1}{\path{#1}}}
\providecommand{\bibinfo}[2]{#2}
\ifx\xfnm\relax \def\xfnm[#1]{\unskip,\space#1}\fi
\bibitem[{Asmussen(2000)}]{asmussen}
\bibinfo{author}{Asmussen, S.}, \bibinfo{year}{2000}.
\newblock \bibinfo{title}{Ruin Probabilities}.
\newblock Advanced Series in Dynamical Systems, \bibinfo{publisher}{World
  Scientific}, \bibinfo{address}{Singapore}.
\bibitem[{Asmussen and Albrecher(2010)}]{asmalb10}
\bibinfo{author}{Asmussen, S.}, \bibinfo{author}{Albrecher, H.},
  \bibinfo{year}{2010}.
\newblock \bibinfo{title}{Ruin Probabilities. Advanced Series on Statistical
  Science and Applied Probability}.
\newblock \bibinfo{publisher}{World Scientific Publishing Co. Inc., Singapore}.
\bibitem[{Avram et~al.(2008a)Avram, Palmowski and Pistorius}]{app08}
\bibinfo{author}{Avram, F.}, \bibinfo{author}{Palmowski, Z.},
  \bibinfo{author}{Pistorius, M.}, \bibinfo{year}{2008}a.
\newblock \bibinfo{title}{A two-dimensional ruin problem on the positive
  quadrant}.
\newblock \bibinfo{journal}{Insurance: Mathematics and Economics}
  \bibinfo{volume}{42}, \bibinfo{pages}{227 -- 234}.
\newblock \DOIprefix\doi{10.1016/j.insmatheco.2007.02.004}.
\bibitem[{Avram et~al.(2008b)Avram, Palmowski and Pistorius}]{app08b}
\bibinfo{author}{Avram, F.}, \bibinfo{author}{Palmowski, Z.},
  \bibinfo{author}{Pistorius, M.R.}, \bibinfo{year}{2008}b.
\newblock \bibinfo{title}{Exit problem of a two-dimensional risk process from
  the quadrant: Exact and asymptotic results}.
\newblock \bibinfo{journal}{Ann. Appl. Probab.} \bibinfo{volume}{18},
  \bibinfo{pages}{2421--2449}.
\newblock \DOIprefix\doi{10.1214/08-AAP529}.
\bibitem[{Badescu et~al.(2011)Badescu, Cheung and Rabehasaina}]{Badescu2011}
\bibinfo{author}{Badescu, A.L.}, \bibinfo{author}{Cheung, E.C.K.},
  \bibinfo{author}{Rabehasaina, L.}, \bibinfo{year}{2011}.
\newblock \bibinfo{title}{A two-dimensional risk model with proportional
  reinsurance}.
\newblock \bibinfo{journal}{J. Appl. Prob.} \bibinfo{volume}{48},
  \bibinfo{pages}{749–765}.
\newblock \DOIprefix\doi{10.1239/jap/1316796912}.
\bibitem[{Behme et~al.(2020)Behme, Klüppelberg and Reinert}]{Behme2020}
\bibinfo{author}{Behme, A.}, \bibinfo{author}{Klüppelberg, C.},
  \bibinfo{author}{Reinert, G.}, \bibinfo{year}{2020}.
\newblock \bibinfo{title}{Ruin probabilities for risk processes in a bipartite
  network}.
\newblock \bibinfo{journal}{Stochastic Models} \bibinfo{volume}{36},
  \bibinfo{pages}{548--573}.
\newblock \DOIprefix\doi{10.1080/15326349.2020.1760109}.
\bibitem[{Boxma and Mandjes(2019)}]{BoxmaMandjes}
\bibinfo{author}{Boxma, O.}, \bibinfo{author}{Mandjes, M.},
  \bibinfo{year}{2019}.
\newblock \bibinfo{title}{Affine storage and insurance risk models}.
\newblock \bibinfo{note}{Preprint}.
\bibitem[{Burnecki et~al.(2011)Burnecki, Janczura and Weron}]{buretal11}
\bibinfo{author}{Burnecki, K.}, \bibinfo{author}{Janczura, J.},
  \bibinfo{author}{Weron, R.}, \bibinfo{year}{2011}.
\newblock \bibinfo{title}{Building loss models}, in:
  \bibinfo{booktitle}{Statistical Tools for Finance and Insurance, 2nd ed.}.
  \bibinfo{publisher}{Springer}, \bibinfo{address}{Berlin}, pp.
  \bibinfo{pages}{293--328}.
\bibitem[{Burnecki et~al.(2021)Burnecki, Teuerle and
  Wilkowska}]{burteuwilrisks21}
\bibinfo{author}{Burnecki, K.}, \bibinfo{author}{Teuerle, M.A.},
  \bibinfo{author}{Wilkowska, A.}, \bibinfo{year}{2021}.
\newblock \bibinfo{title}{Ruin probability for the insurer-reinsurer model for
  exponential claims: A probabilistic approach}.
\newblock \bibinfo{journal}{Risks} \bibinfo{volume}{9}.
\newblock \URLprefix \url{https://www.mdpi.com/2227-9091/9/5/86},
  \DOIprefix\doi{10.3390/risks9050086}.
\bibitem[{Collamore(1996)}]{collamore1996}
\bibinfo{author}{Collamore, J.F.}, \bibinfo{year}{1996}.
\newblock \bibinfo{title}{{Hitting probabilities and large deviations}}.
\newblock \bibinfo{journal}{The Annals of Probability} \bibinfo{volume}{24},
  \bibinfo{pages}{2065 -- 2078}.
\newblock \URLprefix \url{https://doi.org/10.1214/aop/1041903218},
  \DOIprefix\doi{10.1214/aop/1041903218}.
\bibitem[{D'Agostino and Stephens(1986)}]{dagste86}
\bibinfo{author}{D'Agostino, R.}, \bibinfo{author}{Stephens, M.},
  \bibinfo{year}{1986}.
\newblock \bibinfo{title}{Goodness-of-Fit Techniques}.
\newblock \bibinfo{publisher}{Marcel Dekker, Inc., New York}.
\bibitem[{Dickson(2008)}]{310}
\bibinfo{author}{Dickson, D.}, \bibinfo{year}{2008}.
\newblock \bibinfo{title}{Some explicit solutions for the joint density of the
  time of ruin and the deficit at ruin}.
\newblock \bibinfo{journal}{Astin Bulletin} \bibinfo{volume}{38},
  \bibinfo{pages}{259--276}.
\bibitem[{Dickson and Willmot(2005)}]{322}
\bibinfo{author}{Dickson, D.}, \bibinfo{author}{Willmot, G.},
  \bibinfo{year}{2005}.
\newblock \bibinfo{title}{The density of the time to ruin in the classical
  poisson risk model}.
\newblock \bibinfo{journal}{Astin Bulletin} \bibinfo{volume}{35},
  \bibinfo{pages}{45--60}.
\bibitem[{Dickson and Hipp(1998)}]{DicksonHipp}
\bibinfo{author}{Dickson, D.C.}, \bibinfo{author}{Hipp, C.},
  \bibinfo{year}{1998}.
\newblock \bibinfo{title}{Ruin probabilities for {E}rlang(2) risk processes}.
\newblock \bibinfo{journal}{Insurance: Mathematics and Economics}
  \bibinfo{volume}{21}, \bibinfo{pages}{251--262}.
\bibitem[{Garcia(2005)}]{389}
\bibinfo{author}{Garcia, J.}, \bibinfo{year}{2005}.
\newblock \bibinfo{title}{Explicit solutions for the survival probabilities in
  the classical risk models}.
\newblock \bibinfo{journal}{Astin Bulletin} \bibinfo{volume}{35},
  \bibinfo{pages}{113–130}.
\bibitem[{Grandell(1991)}]{gra91}
\bibinfo{author}{Grandell, J.}, \bibinfo{year}{1991}.
\newblock \bibinfo{title}{Aspects of Risk Theory}.
\newblock \bibinfo{publisher}{Springer, Berlin}.
\bibitem[{Grenander(1995)}]{grenander95}
\bibinfo{author}{Grenander, U.}, \bibinfo{year}{1995}.
\newblock \bibinfo{title}{A survey of the life and works of {H}arald
  {C}ram\'er}.
\newblock \bibinfo{journal}{Scandinavian Actuarial Journal}
  \bibinfo{volume}{1995}, \bibinfo{pages}{2--5}.
\newblock \DOIprefix\doi{10.1080/03461238.1995.10413944}.
\bibitem[{Hult et~al.(2005)Hult, Lindskog, Mikosch and
  Samorodnitsky}]{hultetal2005}
\bibinfo{author}{Hult, H.}, \bibinfo{author}{Lindskog, F.},
  \bibinfo{author}{Mikosch, T.}, \bibinfo{author}{Samorodnitsky, G.},
  \bibinfo{year}{2005}.
\newblock \bibinfo{title}{{Functional large deviations for multivariate
  regularly varying random walks}}.
\newblock \bibinfo{journal}{The Annals of Applied Probability}
  \bibinfo{volume}{15}, \bibinfo{pages}{2651 -- 2680}.
\newblock \URLprefix \url{https://doi.org/10.1214/105051605000000502},
  \DOIprefix\doi{10.1214/105051605000000502}.
\bibitem[{Lundberg(1903)}]{lund}
\bibinfo{author}{Lundberg, F.}, \bibinfo{year}{1903}.
\newblock \bibinfo{title}{I. {A}pproximerad framstallning af
  sannolikhetsfunktionen ; ii. {A}terforsakring af kollektivrisker}.
\newblock \bibinfo{note}{Uppsala}.
\bibitem[{Michna(2020)}]{Michna2020}
\bibinfo{author}{Michna, Z.}, \bibinfo{year}{2020}.
\newblock \bibinfo{title}{Ruin probabilities for two collaborating insurance
  companies}.
\newblock \bibinfo{journal}{Probab. Math. Statist.} \bibinfo{volume}{40},
  \bibinfo{pages}{369--386}.
\bibitem[{Palmowski and Rolski(2002)}]{MeTomek}
\bibinfo{author}{Palmowski, Z.}, \bibinfo{author}{Rolski, T.},
  \bibinfo{year}{2002}.
\newblock \bibinfo{title}{A technique for the exponential change of measure for
  {M}arkov processes}.
\newblock \bibinfo{journal}{Bernoulli} \bibinfo{volume}{8},
  \bibinfo{pages}{767--785}.
\bibitem[{Panjer and Willmot(1992)}]{panjer1992}
\bibinfo{author}{Panjer, H.}, \bibinfo{author}{Willmot, G.},
  \bibinfo{year}{1992}.
\newblock \bibinfo{title}{Insurance risk models}.
\newblock \bibinfo{publisher}{Society of Acturaries},
  \bibinfo{address}{Schaumburg}.
\bibitem[{Rolski et~al.(1999)Rolski, Schmidli, Schmidt and Teugels}]{rolski}
\bibinfo{author}{Rolski, T.}, \bibinfo{author}{Schmidli, H.},
  \bibinfo{author}{Schmidt, V.}, \bibinfo{author}{Teugels, J.},
  \bibinfo{year}{1999}.
\newblock \bibinfo{title}{Stochastic Processes for Insurance and Finance}.
\newblock Wiley Series in Probability and Statistics,
  \bibinfo{publisher}{Wiley}, \bibinfo{address}{New York}.
\bibitem[{S.~Asmussen and Pistorius(2004)}]{AAP}
\bibinfo{author}{S.~Asmussen, F.A.}, \bibinfo{author}{Pistorius, M.},
  \bibinfo{year}{2004}.
\newblock \bibinfo{title}{Russian and {A}merican put options under exponential
  phase-type {L}évy models}.
\newblock \bibinfo{journal}{Stoch. Proc. Appl.} \bibinfo{volume}{109(1)},
  \bibinfo{pages}{79--111}.
\bibitem[{Stanford and Stroi\'nski(1994)}]{Stanford}
\bibinfo{author}{Stanford, D.}, \bibinfo{author}{Stroi\'nski, K.},
  \bibinfo{year}{1994}.
\newblock \bibinfo{title}{Recursive method for computing finite– time ruin
  probabilities for phase–distributed claim sizes}.
\newblock \bibinfo{journal}{Astin Bulletin} \bibinfo{volume}{24},
  \bibinfo{pages}{235--254}.
\bibitem[{Tijms(1994)}]{34}
\bibinfo{author}{Tijms, H.}, \bibinfo{year}{1994}.
\newblock \bibinfo{title}{Stochastic Models: an Algorithmic Approach}.
\newblock \bibinfo{publisher}{Wiley}, \bibinfo{address}{Chichester}.
\bibitem[{Willmot and Woo(2007)}]{893}
\bibinfo{author}{Willmot, G.}, \bibinfo{author}{Woo, J.}, \bibinfo{year}{2007}.
\newblock \bibinfo{title}{On the class of {E}rlang mixtures with risk theoretic
  applications}.
\newblock \bibinfo{journal}{North American Actuarial J.} \bibinfo{volume}{11},
  \bibinfo{pages}{99--115}.

\end{thebibliography}
\end{document}